\pgfplotsset{compat=1.11}
\DeclarePairedDelimiterX\setc[2]{\{}{\}}{\,#1 \;\delimsize\vert\; #2\,} 
\newsavebox{\mybox}\newsavebox{\mysim}
\newcommand{\distras}[1]{%
  \savebox{\mybox}{\hbox{\kern3pt$\scriptstyle#1$\kern3pt}}%
  \savebox{\mysim}{\hbox{$\sim$}}%
  \mathbin{\overset{#1}{\kern\z@\resizebox{\wd\mybox}{\ht\mysim}{$\sim$}}}%
}
\newtheorem{theorem}{Theorem}
\newtheorem{lemma}[theorem]{Lemma}
\theoremstyle{remark}
\newtheorem*{remark}{Remark}
\theoremstyle{definition}
\newtheorem{definition}{Definition}
\theoremstyle{definition}
\newtheorem{assumption}{Assumption}
\theoremstyle{definition}
\newtheorem{approximation}{Approximation}
\begin{document}


\title{A Mathematical Justification for Exponentially Distributed NLOS Bias \\[0.2ex]
\thanks{The authors are with Wireless@VT, Bradley Department of Electrical and Computer Engineering, Virginia Tech, Blacksburg, VA 24061 USA (e-mail: olone@vt.edu; hdhillon@vt.edu; buehrer@vt.edu).}
\thanks{The work of C. E. O'Lone was supported in part by the Bradley Graduate Fellowship through the Bradley Department of Electrical and Computer Engineering, Virginia Tech.  The work of H. S. Dhillon was supported by U.S. NSF Grant ECCS-1731711.}}
\author{Christopher~E.~O'Lone,~Harpreet~S.~Dhillon,~and~R.~Michael~Buehrer \vspace{-20pt}}

\maketitle


\begin{abstract}
In the past few decades, the localization literature has seen many models attempting to characterize the non-line-of-sight (NLOS) bias error commonly experienced in range measurements.  These models have either been based on specific measurement data or chosen due to attractive features of a particular distribution, yet to date, none have been backed by rigorous analysis.  Leveraging tools from stochastic geometry, this paper attempts to fill this void by providing the first analytical backing for an NLOS bias error model.  Using a Boolean model to statistically characterize the random locations, orientations, and sizes of reflectors, and assuming first-order (\emph{i.e.}, single-bounce) reflections, the distance traversed by the \emph{first-arriving} NLOS path is characterized. Under these assumptions, this analysis reveals that NLOS bias exhibits an exponential form and can in fact be well approximated by an exponential distribution -- a result consistent with previous NLOS bias error models in the literature.  This analytically derived distribution is then compared to a common exponential model from the literature, revealing this distribution to be a close match in some cases and a lower bound in others.  Lastly, the assumptions under which these results were derived suggest this model is aptly suited to characterize NLOS bias in 5G millimeter wave systems as well. \\[-3.5ex]
\end{abstract}


\begin{IEEEkeywords}
Localization, range measurement, non-line-of-sight (NLOS), stochastic geometry, Boolean model, Poisson point process (PPP), millimeter wave (mm-wave), first-order reflection.
\end{IEEEkeywords}

\vspace{-17pt}
\section{Introduction}

	Localization techniques utilizing range measurements, \emph{e.g.,} time-of-arrival (TOA) or time-difference-of-arrival (TDOA), commonly rely on LOS assumptions for best performance.  However, the ranging signal is frequently subjected to channel effects, such as reflections or diffraction, which leads to range measurements that are erroneously larger than the true distance between the transmitter and receiver.  Thus, in addition to noise, a positive bias term needs to be accounted for in the measurement.  The approach to dealing with this term has been the subject of much research over the years in the localization literature \cite[Sec. VII]{Char_NLOS_Bias}, \cite{NLOS_Survey}.

	While there are many aspects to the NLOS bias problem, such as algorithmic considerations \cite{NLOS_Survey} and localization performance modeling \cite{Qi_1}, one fundamental issue that arises is how to statistically characterize the NLOS bias.
In 2007, \cite{Mailaender} summarized the state of the NLOS bias research: ``At the present time, very little is known about the statistics of the NLOS variables [bias] in realistic propagation environments, and there are no established models.''  
Since then, there have been many proposed NLOS bias distributions such as a skew $t$ \cite{Skew_t}, a half-Gaussian \cite{Mailaender}, a Rayleigh \cite{GMix1}, a shifted Gaussian \cite{GMix1}, and a positive uniform \cite{DrB}.  Prior to these, a gamma distribution was also proposed \cite{Qi_2} and an exponential distribution was, and still is, a commonly used model \cite{Swaroop}, \cite{Chen}, \cite{exp_justification}.  There have also been previous attempts to use measurements to generate purely empirical NLOS bias distributions \cite{Win_Empirical}.  However, all of these bias distributions were chosen solely due to their tractability, desirable features, \emph{e.g.} a positive support, or their close fit to empirical measurements in specific scenarios. To date, none have been theoretically verified. 

	In an effort to finally bring an analytical backing to the NLOS bias problem, it is helpful to utilize tools from stochastic geometry; in particular, the Boolean model \cite{Stoyan}.  This model places distributions on the random positions, orientations, and sizes of buildings within a network.  The seminal work utilizing the Boolean model to study propagation within a cellular network was conducted in \cite{Heath}, in which the model was used to derive metrics such as connectivity and coverage probability.   
	
	The most relevant aspect of the Boolean model, in terms of addressing the NLOS bias problem, is its ability to handle reflections.  With the recent push towards 5G cellular and its use of millimeter wave (mm-wave) frequencies, diffraction effects become negligible while reflections begin to dominate propagation \cite{AndrewsHeath}.  Thus, the Boolean model becomes an attractive analytical tool.  In \cite{Nor}, the model was used to generate a power delay profile (PDP) under first-order reflections, independent blocking, and under the condition that all buildings have the same orientation per a channel realization.  
The work in \cite{GDasConf} and \cite{GDasJournal} extend that of \cite{Nor} by considering buildings with random orientations and equipping the transmitter and receiver with directional antennas.  
For this more general model, channel characteristics such as the PDP, the average number of reflections, etc., were derived under first-order reflections and independent blocking.  
Lastly, in an effort to determine coverage probability, \cite{Aroon} derives the distribution of the shortest reflected path. This was derived by considering a PPP of base stations, with the mobile at the origin, and considering the closest reflector to the mobile (with reflectors modeled as line segments).  Then, the distribution of the shortest path among the eligible base stations, through this single reflector, to the mobile was derived. 
	
	In terms of the NLOS bias problem however, the aforementioned works either don't offer the results we need or don't utilize the assumptions we desire in addressing the NLOS bias problem.  That is, while \cite{Nor} derives the PDP under first-order reflections, it does not provide the distribution of the first-arriving NLOS path length. \emph{Since the range measurement is determined via the first-arriving signal, it is desirable to have a characterization of only the first-arriving NLOS signal, not a characterization of all NLOS signals.}  Additionally, the model does not assume random orientations of buildings.

	Similarly, \cite{GDasConf} and \cite{GDasJournal} also do not derive a distribution for the shortest NLOS path length, and furthermore, the underlying model (of having the directional antennas at the transmitter and receiver pointed in fixed directions) makes it difficult to determine whether a shorter NLOS path may be available in another direction.  Lastly, although the work in \cite{Aroon} does address finding the shortest NLOS path, the model does not apply to the NLOS bias problem, since a range measurement is for a \emph{single} link, not for a mobile with multiple possible base station links.

\vspace{-11pt}	
\subsection{Contributions} \label{ContributionsSection}	
\vspace{-1pt}	
	The goal of this paper is to provide the first theoretical justification for a choice of an NLOS bias model.  This is achieved by considering a single transmitter-receiver pair, whose LOS path is assumed to be blocked, and which is surrounded by buildings (\emph{i.e.}, reflectors) distributed according to a Boolean model.  For this range measurement setup, we derive a distribution for the path length of the first-arriving NLOS signal under first-order reflections.\footnote{Since the NLOS path length is related to the NLOS bias via the constant distance, $d$, between the Tx and Rx, we refer to the NLOS `bias' and `path length' interchangeably, \emph{i.e.} path length = bias + $d$.}
	In achieving this goal, the paper makes the following contributions:
\begin{enumerate}
\vspace{-1pt}
\item The \emph{reflection region} is defined and derived.  This region characterizes all of the possible placements of a given reflector which produce a first-order reflection between the transmitter and receiver.  This region is a simple, elegant generalization of the `feasible region' in \cite{GDasConf} for omni-directional links, and is derived through a simple appeal to polar coordinates.
\item The distribution of the first-arriving NLOS path length is analytically derived and yields a form similar to an exponential.  This offers an analytical backing for the exponentially distributed NLOS bias models commonly assumed in the localization literature, \emph{e.g.} \cite{Swaroop}, \cite{Chen}, \cite{exp_justification}.  Additionally, we comment on the Boolean model properties which give rise to this exponential form.
\item To highlight the close connection with an exponential distribution, we show that this NLOS distribution can be approximated by a true exponential distribution.
\item Lastly, we compare the true analytical distribution of the first-arriving NLOS path length, along with its approximation, to that generated from a simulated link under a Boolean model, revealing complete agreement with the analytically derived distribution and a close agreement with the exponential approximation.
\end{enumerate}

\vspace{-10pt}
\section{Network Model} \label{Model_Assumptions}
\vspace{-1pt}
	This section outlines important definitions as well as our assumptions and gives a brief description of our setup. \emph{For the remainder of the paper, it is assumed we work in $\mathbb{R}^2$.} 
\vspace{-3pt}
\begin{definition} \label{MinkowskiSum}
\emph{(Minkowski Sum \cite[Ch. 1]{Stoyan})} Let $A$, $B \subset \mathbb{R}^2$ be compact. Then, the \emph{Minkowski sum} is defined as
\vspace{-3pt}
\begin{align*}
A \oplus B \triangleq \setc[\Big]{ x + y \in \mathbb{R}^2 } {x \in A,~ y \in B }.
\end{align*}
\end{definition}
\vspace{-8pt}

\begin{remark}
Intuitively, the Minkowski sum of $A$ with $B$ is an ``enlargement, translation, and deformation'' of the set $A$ \cite{Stoyan}.
\end{remark}

\vspace{-5pt}	
\begin{definition} \label{Boolean}
(\emph{Boolean Model \cite[Ch. 3]{Stoyan})} Let $\mathcal{R}$ be any set of \emph{compact}, non-empty objects in $\mathbb{R}^2$.  Then, a \emph{Boolean model}, $\mathcal{B} \subset \mathbb{R}^2$, is a random set with the following properties:
\vspace{-2pt}
\begin{itemize}
\item The placement of all of the object centers is defined by a \emph{homogeneous} PPP, $\Phi = \{\mathbf{c}_i\}_{i=1}^\infty$, where $\mathbf{c}_i = [x_i, y_i]^T$;
\item The sequence of objects, $\{R_i\}_{i=1}^\infty$, to be placed at the points in $\Phi$ are identically sampled from $\mathcal{R}$, independently of each other and of $\Phi$.
\end{itemize}

\begin{figure}[t]
\centering
\includegraphics [scale=0.32]{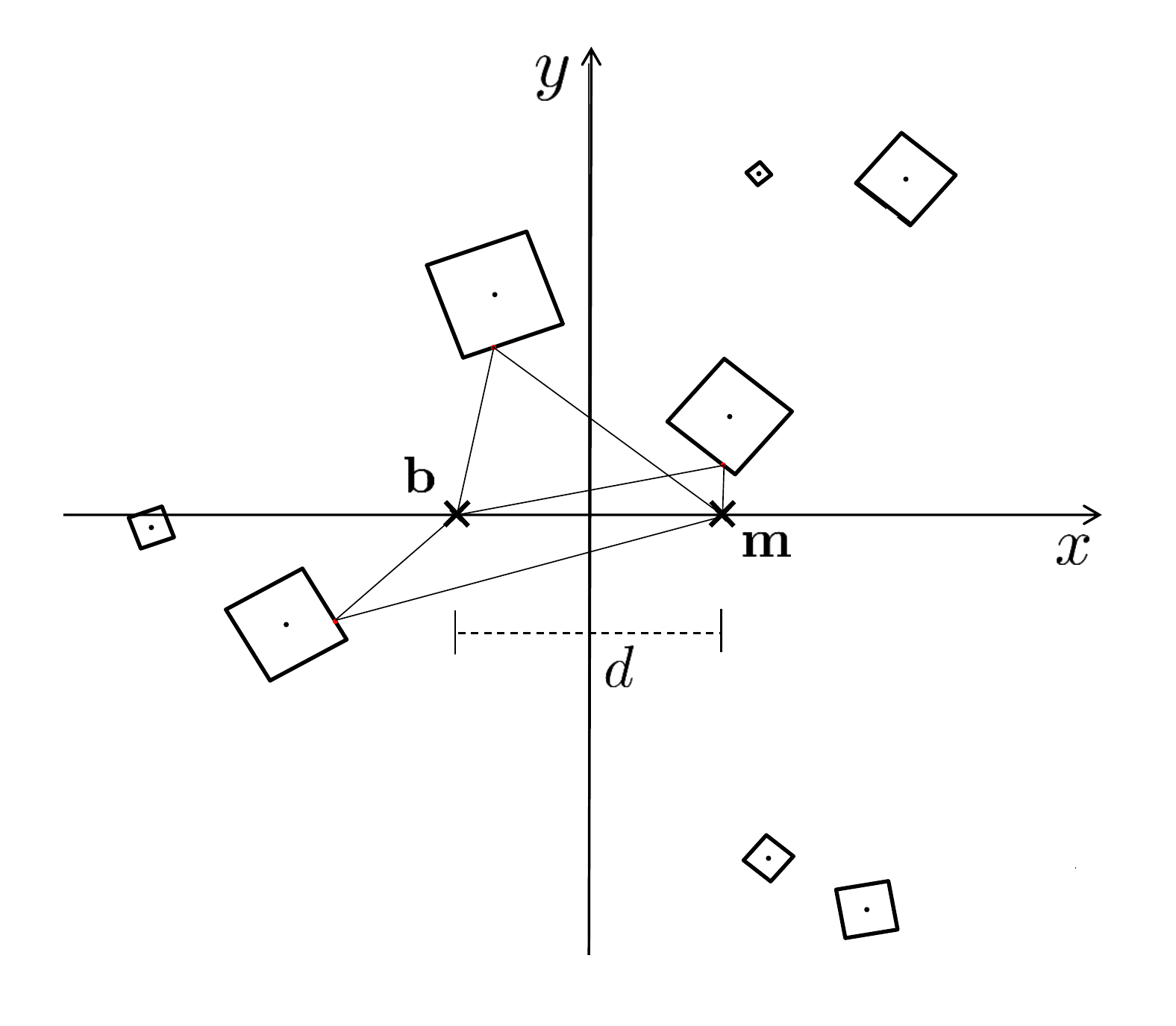}
\vspace{-19pt}
\caption{\textsc{A Boolean Model Realization Over the Test Link Setup}.\\[-6.0ex]} 
\label{BM_Realization}
\end{figure}

\vspace{-2pt}
\noindent With these properties, $\mathcal{B}$ is formally given by
\vspace{-6pt}
\begin{align*}
\mathcal{B} \triangleq \bigcup_{i=1}^\infty \Big( R_i \! \oplus \! \big\{\mathbf{c}_i \big\}  \Big), \, \text{where} ~\, \mathbf{c}_i \! \in \Phi \,~ \text{and} ~\,  R_i \! \in \mathcal{R}. \\[-5ex] 
\end{align*}	
\end{definition}
\vspace{-2pt}

\begin{remark}
The sampled objects are considered to be placed with their center points at the origin before then being translated to their PPP center points, $\mathbf{c}_i$, via the Minkowski sum.
\end{remark}

\vspace{-9pt}
\begin{assumption} \label{NS_BS}
	Reflectors present within the network are randomly distributed according to the Boolean model $\mathcal{B}$.  For this model, we further assume: 
\vspace{-3pt}
\begin{itemize}
\item $\Phi$ has intensity $\lambda$;
\item $\mathcal{R}$ represents the set of all compact, non-empty, \emph{squares} in $\mathbb{R}^2$ with arbitrary orientation; and
\item For a given square in the sequence, $\{R_i\}_{i=1}^\infty$, we have $R_i \! \distras{\text{\scriptsize i.i.d.}}\! f_{W,\Theta}$, where $f_{W,\Theta}(w, \theta)$ is a bivariate, discrete distribution with $W$ representing the square's side-width and $\Theta$ its orientation.  The support of $f_{W, \Theta}$ is $\text{supp}\big(f_{W,\Theta}\big) = \setc[\Big]{ [w_i, \theta_j]^T \!\in \mathbb{R}^2 } {  \big\{w_i \big\}_{i = 1}^{n_w},~ \big\{\theta_j \big\}_{j=1}^{n_\theta} }$.\footnote{Our definition of $\mathcal{R}$ implies $W>0$ and is finite and that $0 \leq \Theta < \pi/2$. We further make one minor mathematical restriction in that $\Theta \neq 0$, \emph{i.e.}, now $0 < \Theta< \pi/2$.  We also assume $n_w >0$, $n_\theta >0$ are finite integers.}\textsuperscript{,}\footnote{The orientation is the angle, $\Theta = \theta$, between a vector emanating from the square's center (passing perpendicularly through an edge) and the $+x$-axis.}
\end{itemize} 
\end{assumption}

\vspace{-3pt}
	Next, since we aim to study a single arbitrary link, we have:
\vspace{-16pt}
\begin{definition}\label{TLSetup}
(\emph{Test Link (TL)}) The \emph{Test Link} is defined to be the link where the base station and mobile are separated by a distance $d>0$ and are at $\mathbf{b} = [-d/2, 0]^T$ and $\mathbf{m} = [d/2, 0]^T$, respectively, in a Cartesian coordinate system.
\end{definition}
\vspace{-7pt}
\begin{remark}
Under Assumption \ref{NS_BS}, the results derived in this paper apply, without loss of generality (w.l.o.g.), to any translation and/or orientation of the TL by the stationary and isotropic properties of the Boolean model.  See Fig. \ref{BM_Realization} for a realization of $\mathcal{B}$, under Assumption \ref{NS_BS}, over the TL setup. 
\end{remark}		
\vspace{-8pt}
\begin{assumption} \label{NoDiffraction}
Only first-order reflections are considered and the Specular Reflection Law holds at the reflector, \emph{i.e.}, the angle of incidence equals the angle of reflection. 
\end{assumption}
\vspace{-8pt}
\begin{remark}
This assumption stems from the fact that reflections tend to dominate NLOS propagation in mm-wave networks \cite{AndrewsHeath}.  Additionally, the effect of higher-order reflections is assumed to be minimal due to increased pathloss and reflection losses, \emph{e.g.}, \cite{Nor}, \cite{GDasConf}.  Although we place a particular emphasis on 5G networks, we note however that our model can roughly capture diffraction effects (although not ideal), since we can assume that a point of diffraction can be replaced by a properly positioned reflector.  Thus, our model is approximately applicable at lower frequencies as well, \emph{i.e.}, 3G/4G networks.\footnote{If at lower frequencies a more accurate model is introduced, which accounts for diffraction effects by placing diffractors according to a Boolean model, we would argue that the \emph{form} of our results would not change, \emph{i.e.}, the exponential \emph{form} of the distribution of the first-arriving NLOS path length should remain the same.  The reason why will become clear in Sec. \ref{NLoSdistribution}.}
\end{remark}
\vspace{-8pt}
\begin{assumption}
We assume that the base station and mobile are omni-directional, that is, either the base station and mobile are both equipped with isotropic antennas, or either the base station, mobile, or both are equipped with directional antennas that perform $360$ degree scans of the environment.  Thus, all possible reflection paths are illuminated.
\end{assumption}

\vspace{-10pt}
\begin{assumption} \label{No_Blocking}
Blocking effects on NLOS paths are ignored.  
\end{assumption}

\vspace{-7pt}
\begin{remark}
Although most works do include blocking, almost all, whether stated or not, assume \emph{independent} blocking, \emph{i.e.}, blocking on each NLOS path is independent.  However, in practice, \emph{dependent} (or \emph{correlated}) blocking is a large contributor, as it is quite reasonable that an object may block more than one reflected path at a time, especially when blockage size increases \cite{Aditya}.   
Consequently, this major assumption of independent blocking ultimately leads to results that are approximations.  Thus, in an effort to keep our mathematical treatment rigorous, we ignore blockages, which then provides a lower bound on the path length of the first-arriving NLOS signal.\footnote{Adding \emph{independent} blocking in our setting is trivial if an approximation of NLOS bias is desired, rather than a lower bound.} 
\end{remark}
\vspace{-2pt}

	Lastly, as we are studying NLOS bias, we implicitly assume that the LOS path is blocked throughout this work.

\vspace{-9pt}
\section{A Geometric Characterization of First-Order Reflection Positions}

\vspace{-2pt}
	Section \ref{Sec_Ref_Hyp} derives the \emph{reflection hyperbola}, \emph{i.e.}, the set of points where first-order reflections can occur for a particular reflector of fixed orientation but arbitrary position.  Then, this set is used to construct the \emph{reflection region}, which is described in detail in Section \ref{Ref_Region_Sec} and is the key to deriving the distribution of the first-arriving NLOS path length in Section \ref{NLoSdistribution}.  To facilitate these derivations, we begin with three important definitions.

\vspace{-4pt}
\begin{definition}
(\emph{Test Reflector (TR)}) The \emph{test reflector}, $r_{w, \theta}$, is defined to be a reflector with \emph{fixed} width $w$, \emph{fixed} orientation $\theta$, and \emph{arbitrary} center point $\mathbf{c} \in \mathbb{R}^2$.  
\end{definition}
\vspace{-8pt}
\begin{remark}
This definition allows us to choose a specific reflector and ``move it around'' $\mathbb{R}^2$ in search of potential first-order reflection points; which brings us to our next two definitions.
\end{remark}
\vspace{-7pt}

\begin{definition}
(\emph{Reflecting Edge}) Consider a TR $r_{w,\theta}$ under the TL setup from Definition \ref{TLSetup}.  We define the \emph{reflecting edge}, $e_i$ for $i \in \{\text{I}, \text{II}, \text{III}, \text{IV}\}$, to be the edge of $r_{w,\theta}$ that facilitates reflections in quadrant $i$, $Q_i$.  (See Fig. \ref{Reflecting_Edges}.)
\end{definition}	
\vspace{-2pt}
\begin{remark}
As a visual example, examining the reflections for the three different reflectors in Fig. \ref{BM_Realization}, we see that reflecting edge $e_\text{I}$ is responsible for first-order reflections between the base station and the mobile in $Q_\text{I}$, $e_\text{II}$ in $Q_\text{II}$, and so on.
\end{remark}
\vspace{-6pt}
\begin{definition}
(\emph{Potential Reflection Point (PRP)}) Consider a TR $r_{w, \theta}$.  A \emph{potential reflection point} is defined to be any point $\mathbf{x} \in \mathbb{R}^2$ such that an edge of $r_{w, \theta}$ can intersect $\mathbf{x}$ to produce a first-order reflection between the base station and mobile.  \emph{This implies the Specular Reflection Law holds at this point.}
\end{definition}
\vspace{-16pt}
\subsection{Derivation of the Reflection Hyperbola} \label{Sec_Ref_Hyp}
\vspace{-1pt}
	Using our new terminology, we wish to find the set of all the PRPs for TR $r_{w,\theta}$. This is given by the following lemma:

\begin{figure}[t]
\centering
\includegraphics [scale=0.30]{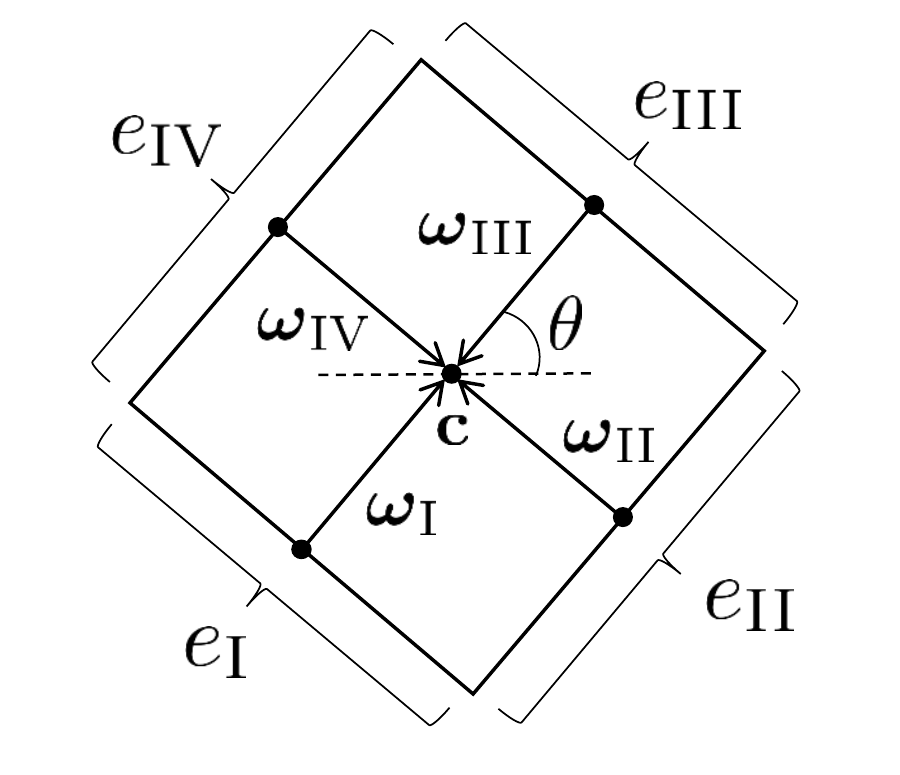}
\vspace{-7pt}
\caption{\textsc{Reflecting Edges for Test Reflector $r_{w,\theta}$.} For reference, the vector $\boldsymbol{\omega}_\text{I}$, from the edge's center point to the center of the reflector, always has angle $\theta$ (the reflector's orientation). The reflecting edge, $e_\text{I}$, is always opposite this vector.  The remaining `center-displacement vectors' and their corresponding edges are labeled in increasing order counterclockwise.\\[-5.0ex]}
\label{Reflecting_Edges}
\end{figure}

\begin{figure}[t]
\centering
\includegraphics [scale=0.32]{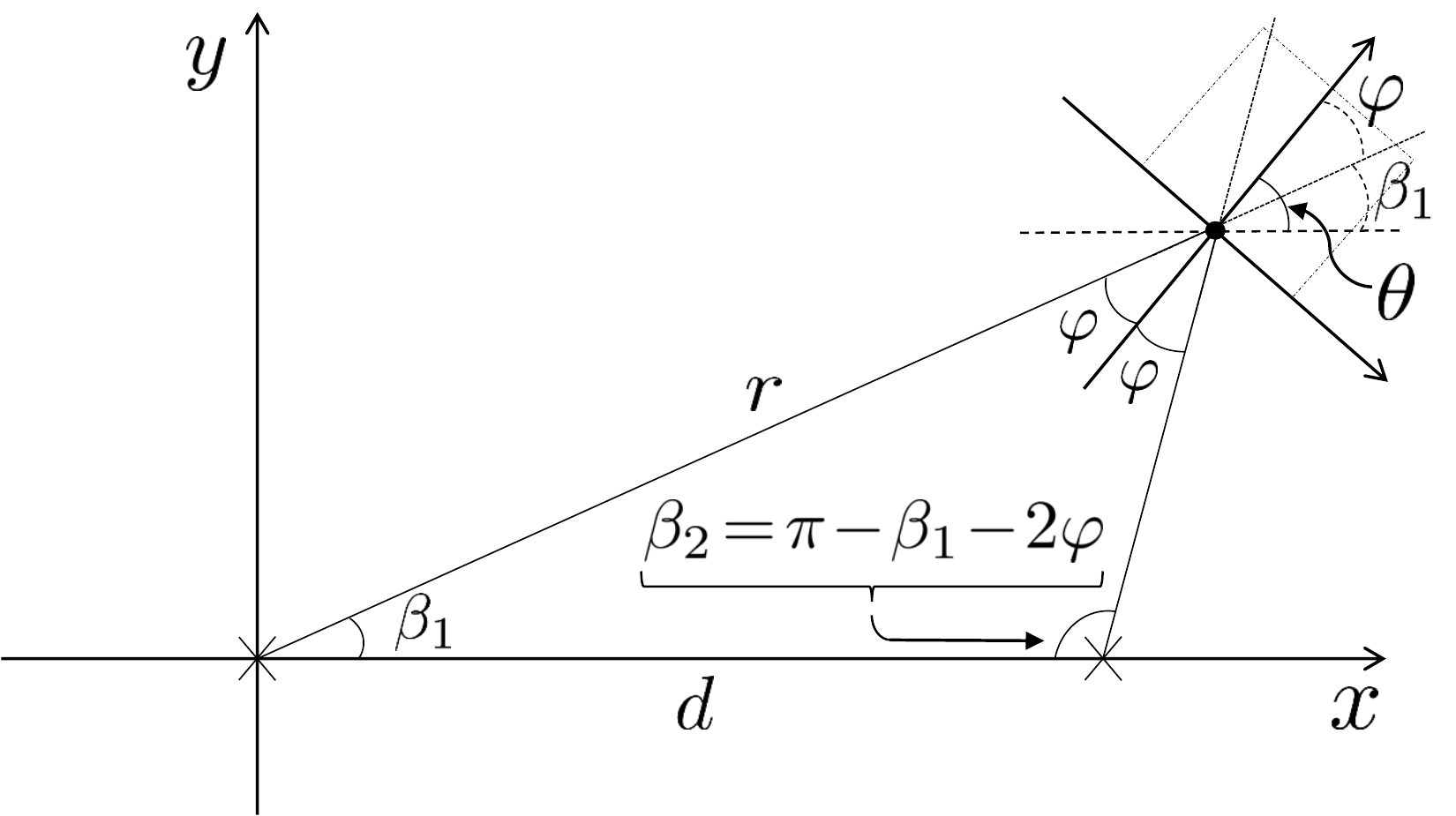}
\vspace{-4pt}
\caption{\textsc{Translation of the Test Link (TL).}  Depicted is a translation of the TL by $d/2$.  This allows for simple polar coordinate representations of the base station, mobile, and potential reflection points.\\[-5.5ex]}
\label{Polar_Coordinates}
\end{figure}

\vspace{-2pt}
\begin{lemma} \label{hyperbola}
Let $\theta \in (0, \pi/2)$ and consider the TL setup.  Then, the set of all PRPs for TR $r_{w,\theta}$, denoted by $\mathcal{H}_\theta$, is given by 
\begin{align} \label{Lemma_Equation}
\mathcal{H}_\theta = \setc[\Big]{ \! [x,y]^T \!\! \in  \mathbb{R}^2 } { y^2 - x^2 + 2\cot(2\theta)xy + d^2/4 = 0 \!}.
\end{align}
\end{lemma}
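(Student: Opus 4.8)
The plan is to recast the Specular Reflection Law as a condition on the polar angles of the two rays leaving a candidate point, and then collapse that condition to the stated quadratic using the tangent addition formula. First I would fix a candidate PRP $\mathbf{x} = [x,y]^T$ and form the vectors pointing from $\mathbf{x}$ back toward the endpoints, $\mathbf{b} - \mathbf{x}$ and $\mathbf{m} - \mathbf{x}$, denoting their angles by $\alpha$ and $\beta$; this is exactly the polar representation suggested by Fig.~\ref{Polar_Coordinates}, which lets me read $\alpha$ and $\beta$ off the translated coordinates cleanly. The key reformulation is that specular reflection off an edge with unit normal $\hat{n}$ is equivalent to $\hat{u}_b + \hat{u}_m \parallel \hat{n}$, where $\hat{u}_b$ and $\hat{u}_m$ are the unit vectors from $\mathbf{x}$ toward $\mathbf{b}$ and $\mathbf{m}$ (this follows from $\hat{d}_r = \hat{d}_i - 2(\hat{d}_i\cdot\hat{n})\hat{n}$ after substituting $\hat{d}_i = -\hat{u}_b$, $\hat{d}_r = \hat{u}_m$). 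Since $\hat{u}_b$ and $\hat{u}_m$ are unit vectors, their sum lies along the bisector of the angle between them, so the normal must bisect that angle. Because the reflecting edge $e_\text{I}$ has its normal along the direction opposite $\boldsymbol{\omega}_\text{I}$, i.e.\ at angle $\theta$, the reflection condition reduces to $\tfrac12(\alpha+\beta) \equiv \theta \pmod{\pi}$, equivalently $\alpha + \beta \equiv 2\theta$.

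I would then take tangents of both sides and expand with $\tan(\alpha+\beta) = \frac{\tan\alpha + \tan\beta}{1 - \tan\alpha\tan\beta}$. From $\mathbf{b}-\mathbf{x} = (-(d/2+x),-y)$ and $\mathbf{m}-\mathbf{x} = (d/2-x,-y)$ one reads $\tan\alpha = \frac{y}{d/2+x}$ and $\tan\beta = \frac{-y}{d/2-x}$. Substituting and clearing the common denominator $d^2/4 - x^2$ yields, after routine simplification, the single relation
\begin{equation*}
\tan(2\theta) = \frac{-2xy}{\,y^2 - x^2 + d^2/4\,}.
\end{equation*}
Multiplying through by $\cot(2\theta)$ and rearranging gives $y^2 - x^2 + 2\cot(2\theta)xy + d^2/4 = 0$, which is precisely \eqref{Lemma_Equation}. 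The handful of points where an intermediate denominator vanishes (e.g.\ $x = \pm d/2$, or $2\theta = \pi/2$ where one should use $\cot$ directly) are covered by continuity, since the final polynomial is the cleared-denominator form and is valid throughout $\mathbb{R}^2$.

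The main obstacle I expect is not the algebra but justifying that the single equation \eqref{Lemma_Equation} captures \emph{all} PRPs across the four reflecting edges and both branches of the hyperbola, while neither gaining spurious solutions nor losing genuine ones. The resolution hinges on two periodicity facts. First, the square's four edges have normals differing by multiples of $\pi/2$, so as undirected lines only two orientations occur, $\theta$ and $\theta + \pi/2$; the perpendicular edge imposes $\alpha + \beta \equiv 2\theta + \pi$, and since $\cot(2\theta + \pi) = \cot(2\theta)$ it produces the identical equation. Second, passing from the angle congruence $\alpha + \beta \equiv 2\theta \pmod{\pi}$ to the tangent identity is faithful precisely because $\tan$ has period $\pi$: the ``extra'' branch $\alpha + \beta \equiv 2\theta + \pi$ folded in by this step is exactly the admissible congruence for the perpendicular edge, so the tangent step conveniently unifies both legitimate edge orientations rather than introducing error. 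I would close by noting that this is the reason the reflection set is the full two-branch hyperbola, with different quadrants $Q_\text{I}$ through $Q_\text{IV}$ (and their associated reflecting edges $e_\text{I}$ through $e_\text{IV}$) mapping onto different arcs of $\mathcal{H}_\theta$.
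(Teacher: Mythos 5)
Your proof is correct, but it takes a genuinely different route from the paper's. The paper first shifts the link by $d/2$, converts the claimed Cartesian equation \eqref{Lemma_Equation} into the polar form $\ell = d\sin(\alpha - 2\theta)/\sin(2\alpha - 2\theta)$ of \eqref{Polar_Version}, and then proves set equality by double containment, using the Law of Sines on the triangle formed by $\mathbf{b}$, $\mathbf{m}$, and the candidate point together with explicit angle bookkeeping ($\varphi = \theta - \beta_1$, $\beta_2 = \pi + \beta_1 - 2\theta$) to enforce the specular law. You instead stay in the original Cartesian frame, recast specular reflection vectorially as the requirement that the edge normal bisect the angle between the rays toward $\mathbf{b}$ and $\mathbf{m}$ (equivalently $\alpha + \beta \equiv 2\theta \pmod{\pi}$), and expand with the tangent addition formula to land on \eqref{Lemma_Equation} directly. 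Your route is more constructive---it derives the quadratic rather than verifying a pre-converted form---and it makes explicit a point the paper leaves implicit: the single equation captures reflections off \emph{both} undirected edge orientations ($\theta$ and $\theta + \pi/2$), because passing to tangents folds the congruence classes $2\theta$ and $2\theta + \pi$ together, and that folding is precisely what unifies the four reflecting edges and both branches of the hyperbola. What the paper's approach buys in exchange is the polar representation \eqref{Polar_Version} itself and a verification that stays within elementary triangle geometry. One loose end in your write-up: dismissing the vanishing-denominator cases ($x = \pm d/2$) ``by continuity'' is not quite enough, since clearing denominators can in principle create spurious roots; the clean fix is to express the bisector condition as $\sin(\alpha + \beta - 2\theta) = 0$ and multiply by $\lVert \mathbf{b}-\mathbf{x}\rVert \, \lVert \mathbf{m}-\mathbf{x}\rVert$, which yields the polynomial identity with no division at all and leaves only the two degenerate points $\mathbf{b}$, $\mathbf{m}$ to be handled by convention (the paper includes them on the hyperbola as well).
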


\begin{proof}
	To simplify the proof, we transform the TL setup by shifting the base station and mobile to the right by $d/2$ and then consider a polar coordinate system (see Fig. \ref{Polar_Coordinates}).  Let $\mathcal{H}_{\theta}^{'}$ be the set of PRPs in this transformed space.  Next, we convert the r.h.s. of (\ref{Lemma_Equation}) into this transformed space. Shifting to the right by $d/2$ yields:
\vspace{-3pt}
\begin{align*}
y^2 - (x-d/2)^2 + 2\cot(2\theta) (x-d/2) y + d^2/4 = 0. \\[-4ex]
\end{align*}
Using $\cot(2\theta) = \cos(2\theta)/\sin(2\theta)$, simplifying, and rearranging gives:
\vspace{-2pt}
\begin{align*}
\sin(2\theta)y^2 \! - \sin(2\theta)x^2 \!+\! 2\cos(2\theta)xy = d \! \cos(2\theta)y - d \! \sin(2\theta)x.\\[-4ex]
\end{align*}
Next, consider the polar coordinates $\ell \geq 0$ and $\alpha \in [0,2\pi)$ and the polar conversion equations $x = \ell \cos(\alpha)$ and $y = \ell \sin(\alpha)$.  Substituting these equations in for $x$ and $y$ above, after much simplification, yields the condition in the following set, which is the `translated-polar version' of (\ref{Lemma_Equation}):
\vspace{-2pt}
\begin{align} \label{Polar_Version}
\mathcal{H}_{\theta}^{'} = \setc[\Big]{ \! [\ell,\alpha]^T \!\! \in \mathbb{R}^2 } { \ell = d\sin \big( \alpha - 2\theta \big) \big/\sin \big( 2\alpha - 2\theta \big)\!}. \\[-4ex]\nonumber
\end{align}
W.l.o.g., establishing (\ref{Polar_Version}) will establish the claim. 
	
	\underline{\smash{($\subset$)}}:  Let $[r,\beta_1]^T \in \mathcal{H}_{\theta}^{'}$.  This implies $[r,\beta_1]^T$ is a PRP, hence the Specular Reflection Law must hold at this point.  From Fig. \ref{Polar_Coordinates}, $\varphi = \theta - \beta_1$ and $\beta_2 = \pi - \beta_1 - 2\varphi$ $\implies$ $\beta_2 = \pi + \beta_1 - 2\theta$.  Applying the Law of Sines gives:
$d \big/ \sin(2\varphi) = r \big/ \sin(\pi + \beta_1 - 2\theta)$, and substituting in for $\varphi$, simplifying, and solving for $r$ yields: $r = d \sin(\beta_1 - 2\theta) \big/ \sin (2\beta_1 - 2\theta)$ $\implies$ $[r,\beta_1]^T$ is in the r.h.s. of (\ref{Polar_Version}), as desired.

 \underline{\smash{($\supset$)}}: Reverse containment follows by applying the Law of Sines to the condition in the r.h.s. of (\ref{Polar_Version}) and then verifying that the angle of incidence equals the angle of reflection. 
\end{proof}

\vspace{-8pt}
\begin{remark}
Note the following: 1) $\mathcal{H}_{\theta} \neq \O$, \emph{i.e.}, there always exists a PRP for any reflector; 2) \emph{$r_{w,\theta}$ can produce a first-order reflection if and only if $r_{w,\theta}$ intersects $\mathcal{H}_\theta$} (with, of course, the correct reflecting edge); 3) the set condition in (\ref{Lemma_Equation}) is a hyperbola and thus we refer to the set of PRPs, $\mathcal{H}_\theta$, as the \emph{reflection hyperbola}; 4) the reflector orientation, $\theta$, is only present in the ``$xy$'' term of the hyperbola equation, which implies that changing the orientation of the reflector results in a rotation of this hyperbola about the origin; and 5) $\mathbf{b}$ and $\mathbf{m}$ are \emph{not} the foci of this hyperbola, but rather lie on the hyperbola itself.  See Fig. \ref{Reflection_Region} for an example of $\mathcal{H}_\theta$. 
\end{remark}

\vspace{-14pt}
\subsection{The Reflection Region and its Lebesgue Measure} \label{Ref_Region_Sec}
\vspace{-1pt}
	First, consider the TL setup, which we will work under for the remainder of this section.  Then, informally, we define the reflection region to be all of the points in $\mathbb{R}^2$ where TR $r_{w,\theta}$ may lie such that a first-order reflection between the base station and mobile is produced whose total path length is less than or equal to some distance, $s$, where $s>d$. This section is devoted to the formal mathematical construction of this region, along with the derivation of its Lebesgue measure.   
	
	First, as we wish to restrict our attention to first-order reflections of distance $\leq s$, we present the following definition:

\vspace{-5pt}
\begin{definition}\label{NLOSBE}
(\emph{Boundary Ellipse}) 
Let $\mathcal{E}_s$ be the set
\vspace{-2pt}
\begin{align*}
\mathcal{E}_s \triangleq \setc[\Big]{ (x,y) \in \mathbb{R}^2 } { x^2 \big/ u^2 + y^2 \big/ v^2 \leq 1},\\[-4ex]
\end{align*}
where $u^2 = s^2 / 4$ and $v^2 = (s^2 - d^2) / 4$.  We define the \emph{boundary ellipse} to be $\partial\mathcal{E}_s$, \emph{i.e.}, the set of all boundary points of $\mathcal{E}_s$.
\end{definition}	

\vspace{-6pt}
\begin{remark}
Note, $\partial\mathcal{E}_s$ forms an ellipse with foci at $\mathbf{b}$ and $\mathbf{m}$, see Fig. \ref{Reflection_Region}.  An important consequence of this definition is that if $r_{w,\theta}$ has PRPs in $\mathcal{E}_s$ then these PRPs are associated with first-order reflections of total reflected path length $\leq s$. 
\end{remark}
\vspace{-2pt}

	Next, since the PRPs of $r_{w,\theta}$ on the boundary, $\partial\mathcal{E}_s$, are associated with the NLOS path length $s$, \emph{i.e.}, the largest NLOS path length that is still within our reflection region, then these points will help define the outer edges of the reflection region.

\vspace{-4pt}
\begin{lemma} \label{Intersection_Point_Lemma}
For TR $r_{w,\theta}$, there are four PRPs in $\mathcal{H}_\theta \cap \partial\mathcal{E}_s$.  We denote these boundary intersection points by $\boldsymbol{\gamma}_i = [x_i, y_i]^T$, for $i \in \{ \text{I}, \text{II}, \text{III}, \text{IV} \}$, where the subscript corresponds to the quadrant in which the PRP resides. These are given by
\vspace{-4pt}
\begin{align*}
\boldsymbol{\gamma}_\text{\emph{I}} &= \bigg[ \!\sqrt{z_\text{\emph{I, III}}}~\!, \frac{v}{u} \sqrt{u^2 - \!z_\text{\emph{I,III}}} \bigg]^T \! \!, \! ~
\boldsymbol{\gamma}_\text{\emph{II}} =\bigg[ \!-\! \sqrt{z_\text{\emph{II, IV}}}~\!, \frac{v}{u} \sqrt{u^2 -\! z_\text{\emph{II,IV}}}\bigg]^T \! \! \!, \\[-4.2ex]
\end{align*}
$\boldsymbol{\gamma}_\text{\emph{III}} =-\boldsymbol{\gamma}_\text{\emph{I}}$, and $\boldsymbol{\gamma}_\text{\emph{IV}} =-\boldsymbol{\gamma}_\text{\emph{II}}$, where 
\vspace{-3pt}
\begin{align*}
z_\text{\emph{I, III}} =\frac{s^4 \cot^2\!\theta} {4 \Big[ s^2 \csc^2\!\theta - d^2 \Big] } ~~\, \text{and} ~~\, z_\text{\emph{II, IV}} = \frac{ s^4 \tan^2\!\theta} {4 \Big[ s^2 \sec^2\!\theta - d^2 \Big] },
\end{align*}
and $u$ and $v$ are from Def. \ref{NLOSBE}.  (See Fig. \ref{Reflection_Region} for $\boldsymbol{\gamma}_\text{\emph{I}}$ depiction.)
\end{lemma}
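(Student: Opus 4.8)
The plan is to solve the two-conic system directly --- the reflection hyperbola of Lemma~\ref{hyperbola} against the boundary ellipse $\partial\mathcal{E}_s$ of Definition~\ref{NLOSBE} --- while exploiting a symmetry to halve the work. The first observation is that both conics are centered at the origin: the hyperbola equation $y^2 - x^2 + 2\cot(2\theta)xy + d^2/4 = 0$ consists only of quadratic terms and a constant, and $x^2/u^2 + y^2/v^2 = 1$ is manifestly invariant under $(x,y)\mapsto(-x,-y)$. Hence $\mathcal{H}_\theta \cap \partial\mathcal{E}_s$ is stable under this central reflection, which immediately forces $\boldsymbol{\gamma}_\text{III} = -\boldsymbol{\gamma}_\text{I}$ and $\boldsymbol{\gamma}_\text{IV} = -\boldsymbol{\gamma}_\text{II}$, so it suffices to find the intersections in the closed upper half-plane $y \geq 0$. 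A quick axis check shows none of these lie on the axes: on $y=0$ the ellipse meets the axis at $x = \pm s/2$ and the hyperbola at $x = \pm d/2$, which differ since $s>d$, while on $x=0$ the hyperbola has no real point. Thus the upper half-plane contains two intersection points, one in $Q_\text{I}$ and one in $Q_\text{II}$, and their antipodes furnish the other two.

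Next I would parametrize the upper half of the ellipse by its abscissa, writing a generic such point as $\big(x, (v/u)\sqrt{u^2 - x^2}\big)$, which satisfies the ellipse constraint automatically and builds in the stated forms $y_i = (v/u)\sqrt{u^2 - z_i}$ with $z_i := x_i^2$. Substituting $y^2 = (v^2/u^2)(u^2 - x^2)$ and $xy = \pm (v/u)\sqrt{x^2(u^2-x^2)}$ into the hyperbola equation and setting $z := x^2$ leaves a single surviving cross term, $2\cot(2\theta)(v/u)\sqrt{z(u^2-z)}$; isolating this lone radical and squaring yields a quadratic in $z$. After simplifying its coefficients with $u^2 = s^2/4$ and $v^2 = (s^2-d^2)/4$, its two roots are the claimed $z_\text{I,III}$ and $z_\text{II,IV}$, where the conversion $\cot(2\theta) = (\cot^2\theta - 1)/(2\cot\theta)$ is what recasts the coefficients into the compact $\cot^2\theta$ and $\tan^2\theta$ expressions of the statement. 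Back-substitution supplies the two $y$-coordinates, and central symmetry then completes $\boldsymbol{\gamma}_\text{III}$ and $\boldsymbol{\gamma}_\text{IV}$.

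The delicate step, which I expect to be the main obstacle, is the bookkeeping introduced by squaring. Squaring collapses the $Q_\text{I}$ equation (cross term positive, $x>0$) and the $Q_\text{II}$ equation (cross term negative, $x<0$) into one and the same quadratic, so its two roots must afterward be reassigned to the correct quadrants and any extraneous root discarded. I would resolve this by comparing $\operatorname{sign}(\cot 2\theta)$, the sign of the isolated radical's prefactor, with the sign of the polynomial on the other side of the pre-squared equation: the root for which these agree belongs to $Q_\text{I}$ and the other to $Q_\text{II}$. Since $\cot(2\theta)$ changes sign at $\theta = \pi/4$ --- where the cross term vanishes and the two roots coincide --- this sign analysis splits into the cases $\theta \lessgtr \pi/4$, and the symmetry $\theta \mapsto \pi/2 - \theta$, which swaps $\cot\theta \leftrightarrow \tan\theta$ and hence $z_\text{I,III} \leftrightarrow z_\text{II,IV}$, lets one case be deduced from the other. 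Finally I would verify $0 < z_\text{I,III},\, z_\text{II,IV} < u^2$, so that each $\boldsymbol{\gamma}_i$ is a genuine real point on the ellipse, thereby confirming exactly four PRPs in $\mathcal{H}_\theta \cap \partial\mathcal{E}_s$.
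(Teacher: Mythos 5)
Your proposal is correct and takes essentially the same approach as the paper: the paper's entire proof is the one-line instruction ``Solve the system of two equations generated by $\mathcal{H}_\theta$ and $\partial\mathcal{E}_s$,'' and your plan simply carries out that elimination in full, with the central-symmetry reduction and post-squaring sign bookkeeping supplying details the paper leaves implicit.
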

\vspace{-10pt}
\begin{proof}
Solve the system of two equations generated by $\mathcal{H}_\theta$ and $\partial\mathcal{E}_s$. (An alternate derivation of $\boldsymbol{\gamma}_\text{II}$ is given in \cite{Nor}.)
\end{proof}

\vspace{-6pt}

	Lastly, if we have a PRP for TR $r_{w, \theta}$ in $Q_i$, then to produce a first-order reflection at this PRP, any point along the reflecting edge $e_i$ will suffice.  Hence, we may finally construct the reflection region for TR $r_{w,\theta}$ and $s$ as follows:

\begin{figure}[t]
\centering
\includegraphics [scale=0.335]{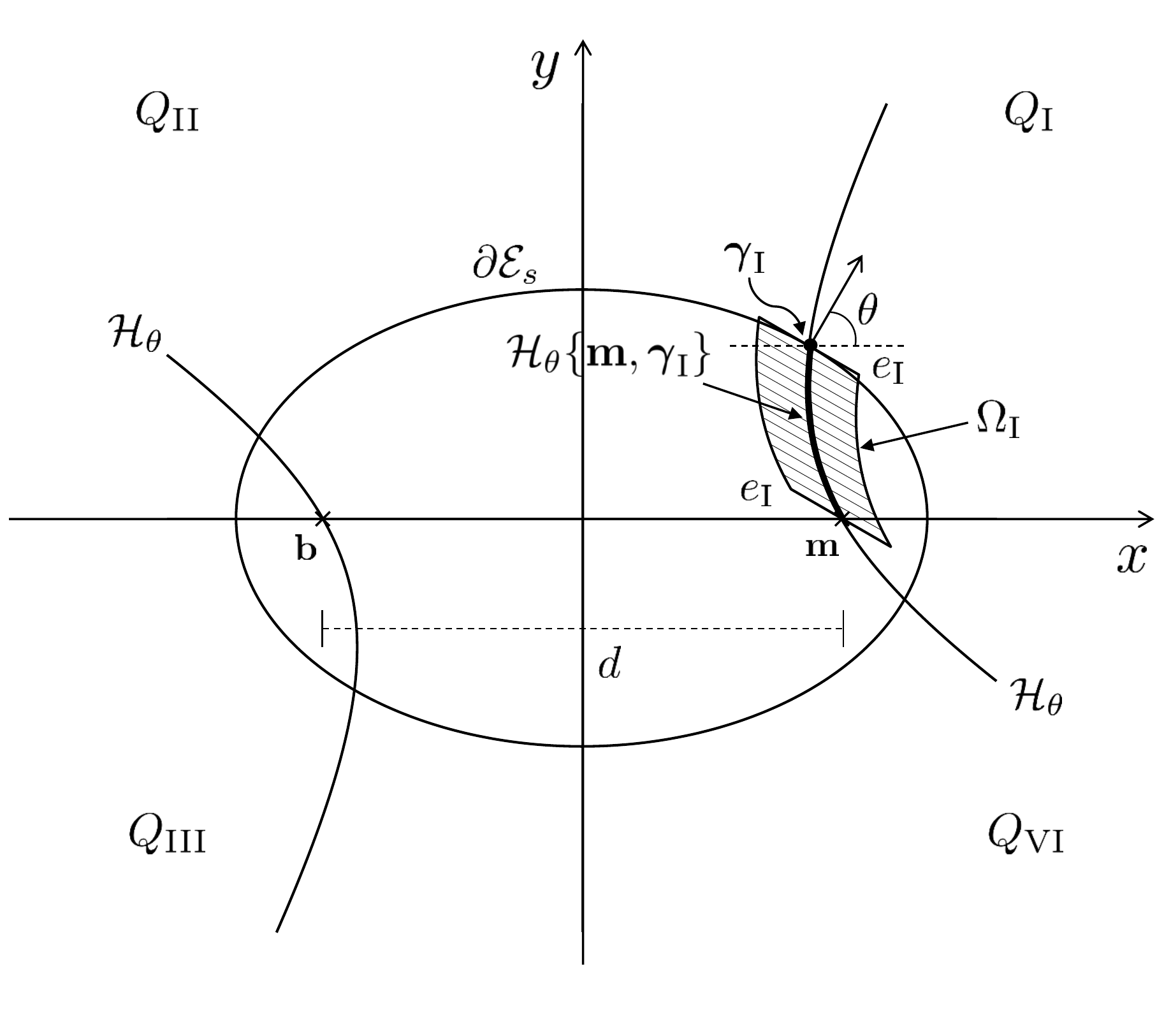}
\vspace{-15pt}
\caption{\textsc{The $Q_\text{I}$ Portion of The Reflection Region.} For test reflector $r_{w,\theta=\pi/3}$, the shaded region, $\Omega_\text{I}$, represents all of the points that \emph{the center of the reflecting edge, $e_\text{\emph{I}}$}, may lie in order to produce a first-order reflection of distance $\leq s$ meters.\\[-5.5ex]}
\label{Reflection_Region}
\end{figure}

\vspace{-4pt}
\begin{definition}\label{Ref_Reg}
(\emph{Reflection Region}) Consider $r_{w,\theta}$'s four reflecting edges, $e_\text{I}$ to $e_\text{IV}$, as four separate line segments and place the center of each one at the origin, preserving their respective orientations.  Further, let $\mathcal{H}_\theta \{\mathbf{p},\mathbf{q}\}$ denote the portion of $\mathcal{H}_\theta \cap \mathcal{E}_s$ between (and including) the points $\mathbf{p}$, $\mathbf{q}$ and define the sets
\vspace{-6pt}
\begin{align*} 
\Omega_\text{I} &\triangleq  \mathcal{H}_\theta \{\mathbf{m},\boldsymbol{\gamma}_\text{I}\} \oplus e_\text{I},  ~~~~~~ \! \Omega_\text{II} \triangleq  \mathcal{H}_\theta \{\mathbf{b},\boldsymbol{\gamma}_\text{II}\} \oplus e_\text{II}, \\
\Omega_\text{III} &\triangleq  \mathcal{H}_\theta \{\mathbf{b},\boldsymbol{\gamma}_\text{III}\} \oplus e_\text{III}, ~~~ \Omega_\text{IV} \triangleq  \mathcal{H}_\theta \{\mathbf{m},\boldsymbol{\gamma}_\text{IV}\} \oplus e_\text{IV}. \\[-4.5ex]
\end{align*}
Then, the \emph{reflection region} for $r_{w,\theta}$ and $s$ is defined as
\vspace{-6pt}
\begin{align}
\Omega \triangleq \bigcup_{i=\text{I}}^\text{IV} \Big( \Omega_i + \boldsymbol{\omega}_i \Big).
\end{align}
\end{definition}
\vspace{-6pt}
\begin{remark}
To clarify, $\Omega_\text{I} =  \mathcal{H}_\theta \{\mathbf{m},\boldsymbol{\gamma}_\text{I}\} \oplus e_\text{I}$, for example, is the region where the center of the reflecting edge, $e_\text{I}$, may lie in order to produce a first-order reflection.  Then, to have this region represent all of the points where $r_{w,\theta}$'s \emph{center} may lie, instead of $e_\text{I}$'s center, we simply need to shift this region by the displacement between $e_\text{I}$'s center and $\mathbf{c}$; hence, the addition of the $\boldsymbol{\omega}_i$s in the definition. For visual clarity, Fig. \ref{Reflection_Region} illustrates the $Q_\text{I}$ portion of the reflection region in terms of $e_\text{I}$'s center, not the reflector's center, $\mathbf{c}$.  That is, $\Omega_\text{I}$ is depicted, not $\Omega_\text{I} + \boldsymbol{\omega}_\text{I}$. 
\end{remark}
\vspace{-5pt}
\begin{remark}
The four quadrant portions of the reflection region intersect at most on a set of measure zero.  
\end{remark}
	
\vspace{-1pt}
	In order to calculate the Lebesgue measure of the reflection region, we must calculate Lebesgue measures of regions generated by Minkowski sums; a difficult task in general.  However, in our specific case here, the next lemma reveals that this is relatively straightforward.
\vspace{-5pt}	
\begin{lemma} \label{Integration_Lemma}
Let $a,b \in \mathbb{R}$, where $a<b$, and let $f:[a,b] \rightarrow \mathbb{R}$ be a $\mu_1$-measurable function.\footnote{$\mu_n$ represents the $n$-dimensional Lebesgue measure.}  Further, let $\mathcal{L}_h$ be a vertical line segment that begins at $[0, -h/2]^T$ and ends at $[0, h/2]^T$. Then, $\mu_2 \big( f \oplus \mathcal{L}_h \big) = h(b-a)$. (See Fig. \ref{Integration_Lemma_Fig} for an illustration.) 
\end{lemma}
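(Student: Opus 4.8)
The plan is to recognize $f \oplus \mathcal{L}_h$ as the vertical $h$-thickening of the graph of $f$ and then evaluate its area by Tonelli's theorem. First I would make the set explicit: interpreting $f$ as its graph $\{[x, f(x)]^T : x \in [a,b]\}$, the Minkowski sum with $\mathcal{L}_h$ is
\begin{align*}
S \triangleq f \oplus \mathcal{L}_h = \setc[\Big]{[x,y]^T \in \mathbb{R}^2}{a \le x \le b,~ |y - f(x)| \le h/2},
\end{align*}
since adding the segment $\{[0,t]^T : |t| \le h/2\}$ displaces each graph point vertically by at most $h/2$. With $S$ written this way, the area is geometrically obvious: every vertical slice has length $h$ and there are $(b-a)$ worth of them, so the real content of the lemma is only in making this rigorous.

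The main obstacle is measurability; once $S$ is known to be $\mu_2$-measurable the area is immediate. To establish this, I would observe that the map $(x,y) \mapsto y - f(x)$ is measurable on the strip $[a,b] \times \mathbb{R}$, being the difference of the continuous coordinate $y$ and the measurable function $f$ lifted to $\mathbb{R}^2$, so that $S$ is the intersection of this strip with the preimage of $[-h/2, h/2]$ and is therefore measurable. The only delicate point is that $f$ is merely $\mu_1$-measurable rather than Borel; I would handle this by replacing $f$ with a Borel representative $\tilde f$ agreeing with $f$ outside a $\mu_1$-null set $N$, which alters $S$ only on $N \times \mathbb{R}$, a $\mu_2$-null set, leaving the measure unchanged.

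Finally, I would apply Tonelli's theorem to the indicator $\mathbf{1}_S$. For each fixed $x \in [a,b]$ the vertical section $\{y : [x,y]^T \in S\}$ is the interval $[f(x)-h/2,\, f(x)+h/2]$ of length exactly $h$, while for $x \notin [a,b]$ the section is empty. Integrating over $x$ then gives
\begin{align*}
\mu_2(S) = \int_{\mathbb{R}} \mu_1\big(\{y : [x,y]^T \in S\}\big)\, dx = \int_a^b h \, dx = h(b-a),
\end{align*}
which is the claim. I expect the measurability argument to be the sole subtlety; the area computation itself collapses to a one-line integral once Tonelli is in hand.
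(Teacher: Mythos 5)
Your proof is correct and takes essentially the same approach as the paper's: the paper's one-line computation, $\mu_2\big(f \oplus \mathcal{L}_h\big) = \int_a^b \big[(f(x)+h/2)-(f(x)-h/2)\big]\,\mathrm{d}\mu_1 = h(b-a)$, is precisely your Tonelli slice integral with the set identification and measurability bookkeeping left implicit. Your version simply supplies those suppressed details (the explicit description of $f \oplus \mathcal{L}_h$ as the vertical $h$-thickening of the graph and the measurability of that set), so there is no substantive difference in method.
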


\begin{figure}
\centering
\begin{tikzpicture} [scale=0.38]

      \fill [pattern=north west lines, domain=-2:3, variable=\x]
      (-2, 0)
      -- plot ({\x}, {sqrt(2*\x+4)+1.5})
      -- (3, 0)
      -- cycle;
      
      \fill [white, domain=-2:3, variable=\x]
      (-2, 0)
      -- plot ({\x}, {sqrt(2*\x+4)+0.5})
      -- (3, 0)
      -- cycle;

      \draw[->] (-5,0) -- (5,0) node[below] {$x$};
      \draw[->] (0,-0.5) -- (0,5) node[left] {$y$};
      
      \draw[line width=0.1mm,scale=1,domain=-2:3,smooth,variable=\x]  plot ({\x},{sqrt(2*\x+4)+1});
      
      \draw[line width=0.3mm,scale=1,domain=-2:3,smooth,variable=\x]  plot ({\x},{sqrt(2*\x+4)+1.5});
      \draw[line width=0.3mm,scale=1,domain=-2:3,smooth,variable=\x]  plot ({\x},{sqrt(2*\x+4)+0.5});
      
      \draw[line width=0.3mm] (-2,0.5) -- (-2,1.5);
      \draw[line width=0.4mm] (3,4.162+0.5) -- (3,4.162-0.5);
      
      \draw[dashed] (-2,0.5) -- (-2,0) node[below] {$a$};
      \draw[dashed] (3,4.162-0.5) -- (3,0) node[below] {$b$};
      
      \draw[-]   (3.3,4.162+0.5) -- (3.3,4.162-0.5);
      \draw[-]   (3.2,4.162-0.5) -- (3.4,4.162-0.5);
      \draw[-]   (3.2,4.162+0.5) -- (3.4,4.162+0.5);
      
      \node[] at (3.7,4.162)  {$h$};

      \draw[-]   (-2.3,0.5) -- (-2.3,1.5);
      \draw[-]   (-2.2,0.5) -- (-2.4,0.5);
      \draw[-]   (-2.2,1.5) -- (-2.4,1.5);
      
      \node[] at (-2.7,1)  {$h$};

      \node[] at (1.6,2.2)  {\footnotesize$f(x) - \frac{h}{2}$};
      \node[] at (1.65,5.2)  {\footnotesize$f(x) + \frac{h}{2}$};
      
      \draw[<-] (-1,1.414+1.6) -- (-1.9,3.7) node[left] {\footnotesize$ f \oplus \mathcal{L}_h$};

\end{tikzpicture}
\vspace{-17pt}
\caption{\textsc{Illustration of Lemma \ref{Integration_Lemma}.\\[-7ex]}  
}
\label{Integration_Lemma_Fig}
\end{figure}
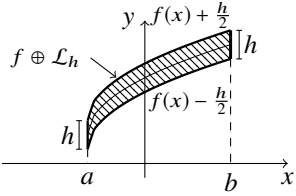

\vspace{-10pt}
\begin{proof}
We have the following
\vspace{-5pt}
\begin{align*}
\mu_2  \big( f  \oplus \mathcal{L}_h \big) &= \int_a^b \Big[ f(x) + h/2 \Big] - \Big[f(x) - h/2 \Big] \, \text{d}  \mu_1 \\
&= \int_a^b \!\! h \, \, \text{d}\mu_1  \, \, = \, \, h \mu_1 \big( [a,b] \big) \, \, = \, \, h(b-a).  \qedhere
\end{align*}
\end{proof}

\vspace{-9pt}
\begin{theorem} \label{LM_Ref_Reg}
The Lebesgue measure of the reflection region is given by
\vspace{-7pt}
\begin{align*}
\mu_2\big( \Omega \big) \! = w \Big[ \! \sqrt{s^2 \!-\! d^2 \!\sin^2\!\theta} - d(\sin \theta + \cos \theta) +\! \sqrt{s^2 \!-\! d^2\! \cos^2\!\theta} \, \Big]. 
\end{align*}
\end{theorem}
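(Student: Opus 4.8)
\section*{Proof Proposal}

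The plan is to reduce $\mu_2(\Omega)$ to four applications of Lemma \ref{Integration_Lemma}, one per quadrant, and then use symmetry to halve the work. Since translation preserves Lebesgue measure, $\mu_2(\Omega_i + \boldsymbol{\omega}_i) = \mu_2(\Omega_i)$, and since the four quadrant pieces overlap at most on a set of measure zero (per the remark following Definition \ref{Ref_Reg}), we get $\mu_2(\Omega) = \sum_{i=\mathrm{I}}^{\mathrm{IV}} \mu_2(\Omega_i)$. Moreover, because the hyperbola in (\ref{Lemma_Equation}) is invariant under $[x,y]^T \mapsto [-x,-y]^T$, because $\boldsymbol{\gamma}_{\mathrm{III}} = -\boldsymbol{\gamma}_{\mathrm{I}}$, $\boldsymbol{\gamma}_{\mathrm{IV}} = -\boldsymbol{\gamma}_{\mathrm{II}}$, $\mathbf{b} = -\mathbf{m}$, and because each edge $e_i$ is a segment \emph{centered at the origin} (hence invariant under point reflection), the point reflection through the origin carries $\Omega_{\mathrm{I}}$ onto $\Omega_{\mathrm{III}}$ and $\Omega_{\mathrm{II}}$ onto $\Omega_{\mathrm{IV}}$. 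Being measure-preserving, this yields $\mu_2(\Omega) = 2\mu_2(\Omega_{\mathrm{I}}) + 2\mu_2(\Omega_{\mathrm{II}})$.

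The heart of the argument is evaluating $\mu_2(\Omega_{\mathrm{I}})$ and $\mu_2(\Omega_{\mathrm{II}})$ through Lemma \ref{Integration_Lemma}. The reflecting edge $e_{\mathrm{I}}$ is a length-$w$ segment centered at the origin and perpendicular to $\boldsymbol{\omega}_{\mathrm{I}}$ (whose angle is $\theta$); hence a rotation of the plane by $-\theta$ turns $e_{\mathrm{I}}$ into the vertical segment $\mathcal{L}_w$ of Lemma \ref{Integration_Lemma}. Provided the rotated arc $\mathcal{H}_\theta\{\mathbf{m},\boldsymbol{\gamma}_{\mathrm{I}}\}$ is the graph of a single-valued function over some interval $[a,b]$, Lemma \ref{Integration_Lemma} gives $\mu_2(\Omega_{\mathrm{I}}) = w(b-a)$, where $b-a$ is exactly the range of the projection of the arc onto $\hat{u} = [\cos\theta,\sin\theta]^T$. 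I would identify $b-a$ with the endpoint projection difference $\langle \boldsymbol{\gamma}_{\mathrm{I}} - \mathbf{m},\, \hat{u}\rangle$. Substituting the closed form of $\boldsymbol{\gamma}_{\mathrm{I}}$ from Lemma \ref{Intersection_Point_Lemma} and using $\csc^2\theta - \cot^2\theta = 1$ collapses this projection to $\tfrac{1}{2}\big[\sqrt{s^2 - d^2\sin^2\theta} - d\cos\theta\big]$, which is positive since $s>d$. The identical argument with a rotation by $\pi/2 - \theta$ handles $\Omega_{\mathrm{II}}$, whose edge is perpendicular to $\boldsymbol{\omega}_{\mathrm{II}}$ (angle $\theta + \pi/2$), and yields $\mu_2(\Omega_{\mathrm{II}}) = \tfrac{w}{2}\big[\sqrt{s^2 - d^2\cos^2\theta} - d\sin\theta\big]$ after the analogous simplification with $\sec^2\theta - \tan^2\theta = 1$.

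Combining the two pieces, $\mu_2(\Omega) = 2\mu_2(\Omega_{\mathrm{I}}) + 2\mu_2(\Omega_{\mathrm{II}})$ produces precisely the stated expression.

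The main obstacle is justifying that Lemma \ref{Integration_Lemma} actually applies: after the rotation, each hyperbola arc must be the graph of a single-valued function, so that the swept region is a vertical band of constant height $w$ and the two endpoints realize the \emph{full} extent of the projection (rather than the curve doubling back). I would establish this by verifying that the projection onto $\hat{u}$ (resp. onto the $\Omega_{\mathrm{II}}$ direction) is monotone along the arc, e.g. by differentiating the polar parametrization $\ell = d\sin(\alpha - 2\theta)/\sin(2\alpha - 2\theta)$ from (\ref{Polar_Version}) over the relevant range of $\alpha$, or by appealing directly to the convexity of the hyperbola branch between $\mathbf{m}$ and $\boldsymbol{\gamma}_{\mathrm{I}}$. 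The remaining work, namely the trigonometric simplification of the endpoint projections, is routine once the explicit coordinates of $\boldsymbol{\gamma}_{\mathrm{I}}$ and $\boldsymbol{\gamma}_{\mathrm{II}}$ are inserted.
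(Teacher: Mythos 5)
Your proposal is correct and follows essentially the same route as the paper's own proof: decompose $\mu_2(\Omega)$ via finite additivity and translation invariance, reduce to two quadrants by symmetry, rotate so the reflecting edge becomes the vertical segment of Lemma \ref{Integration_Lemma}, and substitute the intersection points of Lemma \ref{Intersection_Point_Lemma}, with your projection values $\tfrac{1}{2}\big[\sqrt{s^2-d^2\sin^2\theta}-d\cos\theta\big]$ and $\tfrac{w}{2}\big[\sqrt{s^2-d^2\cos^2\theta}-d\sin\theta\big]$ matching the paper's (\ref{result_simplification}) and (\ref{Omega_4}) exactly. The only differences are cosmetic: you work with $\Omega_{\mathrm{II}}$ rather than the measure-equal $\Omega_{\mathrm{IV}}$, and you explicitly flag the single-valued-graph (monotone projection) condition needed for Lemma \ref{Integration_Lemma}, a point the paper asserts without comment.
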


\vspace{-14pt}
\begin{proof}
We begin by noting that
\vspace{-4pt}
\begin{align}\label{four}
\mu_2 \big( \Omega \big) &\stackrel{(a)}{=} \mu_2 \Bigg( \bigcup_{i=\text{I}}^\text{IV} \Big( \Omega_i + \boldsymbol{\omega}_i \Big) \Bigg) \,\,\stackrel{(b)}{=} \,\, \sum_{i=\text{I}}^\text{IV} \mu_2 \Big( \Omega_i + \boldsymbol{\omega}_i \Big)  \nonumber\\
&\stackrel{(c)}{=} \sum_{i=\text{I}}^\text{IV} \mu_2 \big( \Omega_i \big) \,\, \stackrel{(d)}{=}\,\,  2 \mu_2 \big( \Omega_\text{I} \big) + 2 \mu_2 \big( \Omega_\text{IV} \big),\\[-5ex] \nonumber
\end{align}
where $(a)$ follows from Definition \ref{Ref_Reg}, $(b)$ from finite additivity, $(c)$ from translation invariance, and $(d)$ by the symmetry of $\Omega_\text{I}$ with $\Omega_\text{III}$ and of $\Omega_\text{II}$ with $\Omega_\text{IV}$.

	Hence, we must find $\mu_2 \big( \Omega_\text{I} \big)$ and $\mu_2 \big( \Omega_\text{IV} \big)$.  We start with $\mu_2 \big( \Omega_\text{I} \big)$.  First, consider rotating the coordinate system counterclockwise by $\theta$ (see Fig. \ref{Reflection_Region} for a visual -- rotate the coordinate axis while keeping the reflection region stationary).  Next, in this new coordinate system, $e_\text{I}$ is now a vertical line segment of height $w$ and the set $\mathcal{H}_\theta \{\mathbf{m},\boldsymbol{\gamma}_\text{I}\}$ can now be regarded as a $\mu_1$-measurable function on $\big[x_\theta^m, x_\theta^{\gamma_\text{I}}\big]$, where $x_\theta^m$ and $x_\theta^{\boldsymbol{\gamma}_\text{I}}$ are the corresponding $x$-coordinates of $\mathbf{m}$ and $\boldsymbol{\gamma}_\text{I}$ in this rotated coordinate system. They are given by $x_\theta^m \!= \! (d/2) \cos \theta$ and $x_\theta^{\boldsymbol{\gamma}_\text{I}} \!=\! x_\text{I} \cos \theta + y_\text{I} \sin \theta$, where these were obtained by multiplying $\mathbf{m}$ and $\boldsymbol{\gamma}_\text{I}$ by the rotation matrix $\Bigl[\begin{smallmatrix}\cos \theta & \sin \theta \\ -\sin \theta & \cos \theta \end{smallmatrix}\Bigr]$.  Hence, we may now use Lemma \ref{Integration_Lemma} to compute $\mu_2(\Omega_\text{I})$:
\vspace{-5pt}
\begin{align}
\mu_2 \big( \Omega_\text{I} \big) &\stackrel{(a)}{=} \mu_2 \big( \mathcal{H}_\theta \{\mathbf{m},\boldsymbol{\gamma}_\text{I}\} \oplus e_\text{I} \big)
\stackrel{(b)}{=} w \Big[ x_\theta^{\boldsymbol{\gamma}_\text{I}} - x_\theta^m \Big] \nonumber\\
&\stackrel{(c)}{=} \frac{w}{2} \bigg[\sin \theta \sqrt{s^2 \csc^2\theta - d^2} - d\cos \theta \bigg] \label{result_simplification}, \\[-4ex] \nonumber
\end{align}
where $(a)$ follows by definition, $(b)$ by the use of Lemma \ref{Integration_Lemma} in the rotated coordinated system, and $(c)$ by substituting in for $x_\theta^m$ and $x_\theta^{\boldsymbol{\gamma}_\text{I}}$, using Lemma \ref{Intersection_Point_Lemma} for $x_\text{I}$ and $y_\text{I}$, and simplifying.
	
	Note, $\mu_2 \big( \Omega_\text{IV} \big)$ can be obtained similarly and yields:
\vspace{-4pt}
\begin{align}
\mu_2 \big( \Omega_\text{IV} \big) = \frac{w}{2} \bigg[\cos \theta \sqrt{s^2 \sec^2\theta - d^2} - d\sin \theta \bigg] \label{Omega_4}.
\end{align}

	Substituting (\ref{result_simplification}) and (\ref{Omega_4}) into (\ref{four}) establishes the claim.
\end{proof}
\vspace{-9pt}
\begin{remark}
\emph{Moving forward, we use the notation $\Omega(w, \theta, s)$ to represent the reflection region generated by TR $r_{w, \theta}$ and maximum (first-order reflection) NLOS path length $s$.}\footnote{The reflection region is also implicitly a function of $d$, however, since $d$ remains constant for a given TL setup, we omit it from this notation.}
\end{remark}


\vspace{-9pt}
\section{The Distribution of the Path Length of the First-Arriving NLOS Signal}\label{NLoSdistribution}
\vspace{-1pt}
\begin{theorem} \label{main_thrm}
Let $S$ be the random variable representing the length, in meters, of the shortest (i.e., first-arriving) NLOS path under the TL setup.  Then, under the assumptions in Sec. \ref{Model_Assumptions}, the CDF of $S$ is
\vspace{-6pt}
\begin{align} \label{Dist_of_S}
F_S(s) = 1 - e^{-\lambda \sum\nolimits_{i=1}^{n_w} \sum\nolimits_{j=1}^{n_\theta} f_{W,\Theta}(w_i, \theta_j \!) \,\, \mu_2 \big( \Omega(w_i, \theta_j, s) \big) }, \\[-4.5ex]\nonumber
\end{align}
where $\lambda, n_w, n_\theta$, and $f_{W, \Theta}(w, \theta)$ are given in Assumption \ref{NS_BS}, $\mu_2 \big( \Omega (w, \theta, s) \big)$ is given in Theorem \ref{LM_Ref_Reg}, and $\text{\emph{supp}}(S) = (\text{d}, \infty)$.
\end{theorem}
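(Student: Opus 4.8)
The plan is to reduce the statement to a void-probability computation for a marked Poisson point process, so the real work is translating the deterministic geometry of Section~III into a statement about $\Phi$. The key bridge, which I would state first, is the following ``if and only if'': a reflector of width $w$ and orientation $\theta$ centered at $\mathbf{c}$ produces a first-order reflection between $\mathbf{b}$ and $\mathbf{m}$ of total path length $\leq s$ \emph{exactly when} $\mathbf{c} \in \Omega(w,\theta,s)$. This is the combined content of Definition~\ref{Ref_Reg}, the boundary-ellipse Definition~\ref{NLOSBE} (whose remark identifies PRPs lying in $\mathcal{E}_s$ with reflected paths of length $\leq s$), and the remark after Lemma~\ref{hyperbola} (a reflector reflects iff it meets $\mathcal{H}_\theta$ with the correct edge). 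Since $\Omega(w,\theta,s)$ grows monotonically with $s$ and $S$ is precisely the infimum of the values $s$ for which some reflector center first enters its own reflection region, this yields the event identity $\{S \leq s\} = \{\text{at least one reflector center lies in its corresponding } \Omega(\cdot,\cdot,s)\}$.

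Next I would invoke the probabilistic structure of Assumption~\ref{NS_BS}: the reflectors form a marked PPP whose centers are a homogeneous PPP $\Phi$ of intensity $\lambda$, with each center $\mathbf{c}_i$ carrying an independent mark $(W_i,\Theta_i) \sim f_{W,\Theta}$. By the independent-marking (coloring) theorem, for each fixed mark value $(w_i,\theta_j)$ the centers carrying that mark form a homogeneous PPP of intensity $\lambda f_{W,\Theta}(w_i,\theta_j)$, and these $n_w n_\theta$ sub-processes are mutually independent. Letting $N_{ij}$ count the mark-$(w_i,\theta_j)$ centers falling inside $\Omega(w_i,\theta_j,s)$, and noting that $\Omega(w_i,\theta_j,s)$ is a bounded, measurable set (a finite union of Minkowski sums of measurable curves with line segments, as used in Lemma~\ref{Integration_Lemma}), each $N_{ij}$ is Poisson with mean $\lambda f_{W,\Theta}(w_i,\theta_j)\,\mu_2\big(\Omega(w_i,\theta_j,s)\big)$, with the $N_{ij}$ independent.

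I would then assemble the void probability. The total number of ``successful'' reflectors, $N = \sum_{i,j} N_{ij}$, is a sum of independent Poissons and hence Poisson with mean $\Lambda(s) = \lambda \sum_{i=1}^{n_w}\sum_{j=1}^{n_\theta} f_{W,\Theta}(w_i,\theta_j)\,\mu_2\big(\Omega(w_i,\theta_j,s)\big)$, where $\mu_2(\Omega(\cdot,\cdot,s))$ is supplied by Theorem~\ref{LM_Ref_Reg}. The event identity from the first paragraph gives $\{S \leq s\} = \{N \geq 1\}$, so
\begin{align*}
F_S(s) = \Pr(S \leq s) = 1 - \Pr(N = 0) = 1 - e^{-\Lambda(s)},
\end{align*}
which is the claimed CDF. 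The support claim $(d,\infty)$ follows from two observations: any reflected path is strictly longer than the direct distance $d$, and indeed $\mu_2(\Omega)$ vanishes at $s=d$ (the two square-root terms collapse to $d\cos\theta$ and $d\sin\theta$), so $\Lambda(d)=0$ and $F_S(s)=0$ for $s \leq d$; conversely $\mu_2(\Omega(\cdot,\cdot,s))\to\infty$ as $s\to\infty$ forces $\Lambda(s)\to\infty$ and hence $F_S \to 1$.

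The main obstacle I anticipate is not the Poisson bookkeeping but rigorously pinning down the ``if and only if'' of the first paragraph. One must verify that $\mathbf{c}\in\Omega(w,\theta,s)$ captures a reflection realized by the correct \emph{finite} reflecting edge $e_i$ meeting a PRP within $\mathcal{E}_s$ (not an infinite line), and that the four quadrant pieces neither omit admissible reflector placements nor overlap on more than a measure-zero set, so that the event identity and the additivity in $\Lambda(s)$ are both exact. Once this deterministic characterization is secured, the remaining steps are routine: the marking theorem, additivity of independent Poisson counts, and the PPP void-probability formula.
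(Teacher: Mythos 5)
Your proof is correct and takes essentially the same route as the paper's: both decompose the marked Boolean model into $n_w n_\theta$ independent thinned PPPs (one per mark value $(w_i,\theta_j)$, each with intensity $\lambda f_{W,\Theta}(w_i,\theta_j)$) and apply the PPP void probability to the reflection regions $\Omega(w_i,\theta_j,s)$. Your bookkeeping via the sum of independent Poisson counts $N=\sum_{i,j}N_{ij}$ and $P(N=0)=e^{-\Lambda(s)}$ is algebraically identical to the paper's product of per-mark void probabilities, and your explicit statement of the event identity $\{S\leq s\}$ and verification of the support claim simply fill in details the paper leaves implicit.
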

\vspace{-12pt}
\begin{proof}
We have: $F_S(s) = P[S \leq s] = 1 - P[S>s]$
\vspace{-4pt}
\begin{align*}
&\stackrel{(a)}{=} ~ 1 - \prod_{i=1}^{n_w} \prod_{j=1}^{n_\theta} P\big[\text{No reflectors fall within } \Omega (w_i, \theta_j, s) \big] \\
&\stackrel{(b)}{=} ~ 1 - \prod_{i=1}^{n_w} \prod_{j=1}^{n_\theta} e^{- \lambda \, f_{W,\Theta}(w_i, \theta_j) \, \mu_2 \big( \Omega (w_i, \theta_j, s) \big)}, \\[-5ex]
\end{align*}
where $(a)$ follows from the fact that for each $w_i$ and $\theta_j$ we have independently thinned PPPs and $(b)$ follows from the void probabilities of these thinned PPPs, with thinning density $\lambda f_{W,\Theta}(w_i, \theta_j)$.  Simplifying yields (\ref{Dist_of_S}), as desired.
\end{proof}
\vspace{-9pt}
\begin{remark}
Note that the shortest NLOS path length, $s$, appears in the argument of the exponentials in $(b)$ above.  \emph{Thus, the exponential form of the shortest NLOS path length distribution arises from the void probability of a PPP.}
\end{remark}

\begin{remark}
\emph{Let $B$ be the random variable denoting the NLOS bias distance.  Then, the distribution of B is obtained through the simple relationship: $B = S - d$.}
\end{remark}

	Although not obvious, the argument of the exponent in (\ref{Dist_of_S}) is nearly linear in $s$, and thus, close to a true exponential distribution.  We now digress from the rigor to give a true exponential distribution approximation to (\ref{Dist_of_S}).
\vspace{-4pt}
\begin{approximation} \label{Approx}
Under the same assumptions as Theorem \ref{main_thrm}, the distribution of the shortest NLOS path, $S$, can be approximated by the exponential distribution 
\vspace{-5pt}
\begin{align} \label{approx_formula}
F_S(s) \approx 1 - e^{-2 \, \lambda \, \mathop{{}\mathbb{E}}[W]  \, (s - d)}.
\end{align}
\end{approximation}
\vspace{-15pt}
\begin{proof}
We present a heuristic derivation based on asymptotics.  First, if $X$ is an exponential random variable, then in its distribution, the exponential argument is linear in $x$.  Thus, if we set $g(s)$ equal to the exponential argument in (\ref{Dist_of_S}), then we wish to find a linear approx. $h(s)=ps+q$, where $h(s) \approx g(s)$.

	To find $p$, consider the slope of $g(s)$ for $s \gg d$, \emph{i.e.},
\vspace{-6pt}
\begin{align*}
p &\approx\!  \lim_{s\to \infty} \frac{\text{d}}{\text{d}s} \big[g(s)\big] \\ 
&= \!\lim_{s\to \infty} \!- \lambda \! \sum\limits_{i=1}^{n_w} \! \sum\limits_{j=1}^{n_\theta}  f_{W,\Theta}(w_i, \!\theta_j \!) w_i  \textstyle\Bigg[ \!\frac{s}{\sqrt{s^2 - d^2\! \sin^2\!\theta_j}} \!  +  \! \frac{s}{\sqrt{s^2 - d^2\! \cos^2\!\theta_j}}\!\Bigg] \\
& = -2 \lambda \! \sum\limits_{i=1}^{n_w} \! \sum\limits_{j=1}^{n_\theta}  f_{W,\Theta}(w_i, \!\theta_j \!) w_i \,=\, -2 \lambda \mathop{{}\mathbb{E}}[W]. \\[-5ex]
\end{align*}
Since we desire our approximation to have the same support as the original, we set $g(d) = 0 = h(d)\!\! \implies\!\! q = 2 \lambda \mathop{{}\mathbb{E}}[W]d$.  Thus, $h(s) = -2 \lambda \mathop{{}\mathbb{E}}[W] (s-d)$, and Approximation \ref{Approx} follows. \qedhere
\end{proof}
	  
\vspace{-10pt}  
\begin{remark}
\emph{Not only is this approximation a true exponential distribution, but it also reveals that the distribution of the first-arriving NLOS path length (i.e., the NLOS bias) depends on the density of the reflectors along with their average size.}
\end{remark}

\vspace{-12pt}
\section{Numerical Results} \label{Numerical_Results}
\vspace{-1pt}
	This section compares the theoretical NLOS bias distribution, Theorem \ref{main_thrm}, along with its exponential distribution approximation, Approx. \ref{Approx}, against that obtained from a simulated network of reflectors/buildings and against a common exponential NLOS bias model from the literature \cite{Chen}.  
	
	From Fig. \ref{cdfs_NLoS_Bias}, we can see that Theorem \ref{main_thrm} matches that of simulation, verifying the derivation.  Further, Approx. \ref{Approx} also provides a close match to Theorem \ref{main_thrm}.  Examining the exponential bias model in \cite{Chen}, which was generated, in part, from measurement data at 900MHz, we see that our Approx. \ref{Approx} provides a close fit, despite our model being an approximation at lower frequencies (Assumption \ref{NoDiffraction}).  Note that the close fit is partially due to the rural setting chosen, since this minimizes the effect of blockages, which we do not consider (Assumption \ref{No_Blocking}).  As the density of blockages increases, \emph{e.g.}, to suburban/urban settings, Assumption \ref{No_Blocking} becomes less applicable, and consequently, our model offers a lower bound on the NLOS bias in these scenarios.  Lastly, in addition to our distribution providing an analytical backing for older exponential bias models at lower frequencies, we postulate that our model will offer an even better characterization of NLOS bias in 5G networks.  This stems from the fact that our results were derived under the assumption that NLOS propagation is predominately due to reflections -- a defining feature of 5G mm-wave channels.

\vspace{-10pt}
\section{Conclusion}
\vspace{-2pt}

	Under first-order reflections and a setup of reflectors with random orientations, sizes, and placements, this paper set out to analytically explore the bias experienced on an NLOS range measurement for a typical anchor-target link.  Out of the subsequent analysis arose an NLOS bias distribution that exhibited an exponential form and that could be closely approximated by an exponential distribution.  This result is not only consistent with the exponential model for NLOS bias seen in the localization literature, \emph{e.g.} \cite{Swaroop}, \cite{exp_justification}, but it was further shown to closely match a commonly used exponential model in environments with low blockage density \cite{Chen}.  As blockage density increases, our distribution provides a lower bound on the NLOS bias.  Finally, although further analysis is warranted, these initial results suggest that out of the many range measurement NLOS bias models that exist in the localization literature, an exponentially distributed model may now be the first with a rigorous analytical backing.

\begin{figure}[t]
\centering
\includegraphics [scale=0.41]{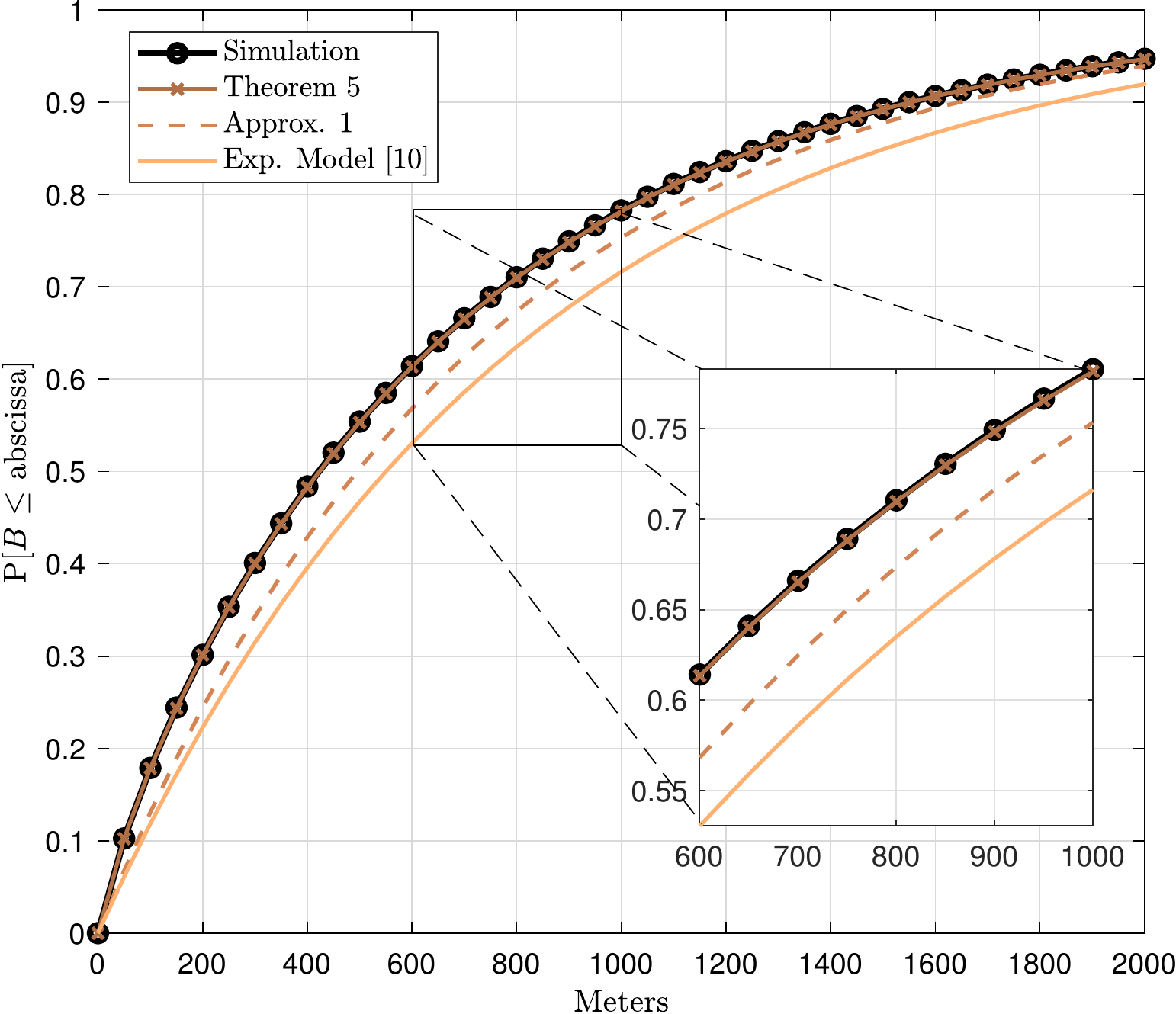}
\vspace{-4pt}
\caption{\textsc{NLOS Bias CDFs}. All four results were conducted on the TL setup of $d = 300m$ and by noting that $B = S-d$.  This comparison was done in a rural setting. Thus, the parameters for Simulation, Thrm. \ref{main_thrm}, and, Approx. \ref{Approx} were chosen s.t. there were an avg. of 10 buildings per \emph{sq. km} (used to set $\lambda$) and the buildings were chosen from the joint uniform distribution where $\text{supp}(W) = \{20m, 40m, \dots, 120m\}$, $\text{supp}(\Theta) = \{ 10^\circ, 20^\circ, \dots, 80^\circ \}$, and $f_{W, \Theta}(w, \theta) = 1/(6 \!\cdot\! 8)$ (this implies $\mathop{{}\mathbb{E}}[W] = 70m$ and that an avg. of $5\%$ of the land was covered by buildings).  Similarly, for the model in \cite[eqs. (11) and (12)]{Chen}, the parameters were chosen according to the rural settings from \cite[Table I]{Chen}, with $\xi$ set equal to it's mean.  Lastly, the Simulation CDF was generated over $10^6$ Boolean model realizations of buildings on an $8km \times 8km$ grid and by finding the shortest NLOS path in each realization. \\[-5.6ex]}
\label{cdfs_NLoS_Bias}
\end{figure}


\vspace{-8pt}


\begin{thebibliography}{1}
\footnotesize
\vspace{-3pt}




\bibitem{Char_NLOS_Bias} M. Hussain, Y. Aytar, N. Trigoni and A. Markham, ``Characterization of non-line-of-sight (NLOS) bias via analysis of clutter topology," in \emph{Proc. of the IEEE/ION PLANS}, Myrtle Beach, SC, 2012, pp. 1247-1256.

\bibitem{NLOS_Survey} I. Guvenc and C. Chong, ``A Survey on TOA Based Wireless Localization and NLOS Mitigation Techniques," in \emph{IEEE Commun. Surveys \& Tutorials}, vol. 11, no. 3, pp. 107-124, 3rd Quarter 2009.

\bibitem{Qi_1} Y. Qi, H. Kobayashi, and H. Suda, ``Analysis of wireless geolocation in a non-line-of-sight environment," in \emph{IEEE Trans. on Wireless Commun.}, vol. 5, no. 3, pp. 672-681, Mar. 2006.

\bibitem{Mailaender} L. Mailaender, ``On the geolocation bounds for round-trip time-of-arrival and all non-line-of-sight channels,'' in \emph{EURASIP J. on Advances in Signal Process.}, vol. 2008, no. 1, pp. 1-10, Oct. 2007.

\bibitem{Skew_t} J. Vil{\`a}-Valls and P. Closas, ``NLOS mitigation in indoor localization by marginalized Monte Carlo Gaussian smoothing,'' in \emph{EURASIP J. on Advances in Signal Process.}, vol. 2017, no. 1, Aug. 2017.

\bibitem{GMix1} F. Yin, C. Fritsche, F. Gustafsson, and A. M. Zoubir, ``TOA-based robust wireless geolocation and Cram\'{e}r-Rao lower bound analysis in harsh  LOS/NLOS environments," in \emph{IEEE Trans. on Signal Process.}, vol. 61, no. 9, pp. 2243-2255, May 2013.

\bibitem{DrB} \emph{Handbook of Position Location: Theory, Practice, and Advances}, 2$^\text{nd}$ ed., Wiley, Hoboken, NJ, 2019, pp. 224.

\bibitem{Qi_2} Y. Qi and H. Kobayashi, ``On geolocation accuracy with prior information in non-line-of-sight environment," in \emph{Proc. of the IEEE 56th Veh. Technol. Conf.}, Vancouver, BC, Canada, Sept. 2002, pp. 285-288.


\bibitem{Swaroop} S. Venkatesh, ``The Design and Modeling of Ultra-Wideband Position-Location Networks,'' Ph.D Dissertation, Dept. of Elec. and Comp. Eng., Virginia Tech, Blacksburg, VA, 2007.


\bibitem{Chen} P. Chen, ``A non-line-of-sight error mitigation algorithm in location estimation," in \emph{Proc. of the IEEE Wireless Commun. and Networking Conf.}, New Orleans, LA, Sept. 1999, pp. 316-320.


\bibitem{exp_justification} S. Gezici and Z. Sahinoglu, ``UWB Geolocation Techniques for IEEE 802.15.4a Personal Area Networks,'' Mitsubishi, Cambridge, MA, USA, TR-2004-110, Aug. 2004.



\bibitem{Win_Empirical} D. B. Jourdan, D. Dardari, and M. Z. Win, ``Position error bound for UWB localization in dense cluttered environments," in \emph{IEEE Trans. on Aerospace and Electronic Systems}, vol. 44, no. 2, pp. 613-628, Apr. 2008.


\bibitem{Stoyan} S. N. Chiu, D. Stoyan, W. S. Kendall, and J. Mecke, \emph{Stochastic Geometry and its Applications}, 3$^\text{rd}$ ed., West Sussex, UK: Wiley, 2013.


\bibitem{Heath} T. Bai, R. Vaze, and R. W. Heath, ``Analysis of blockage effects on urban cellular networks," in \emph{IEEE Trans. on Wireless Commun.}, vol. 13, no. 9, pp. 5070-5083, Sept. 2014.


\bibitem{AndrewsHeath} J. G. Andrews, T. Bai, M. N. Kulkarni, A. Alkhateeb, A. K. Gupta, and R. W. Heath, ``Modeling and analyzing millimeter wave cellular systems," in \emph{IEEE Trans. on Commun.}, vol. 65, no. 1, pp. 403-430, Jan.$\,\,$2017.

\bibitem{Nor} N. A. Muhammad, P. Wang, Y. Li, and B. Vucetic, ``Analytical model for outdoor millimeter wave channels using geometry-based stochastic approach," in \emph{IEEE Trans. on Veh. Technol.}, vol. 66, no. 2, pp. 912-926, Feb. 2017.

\bibitem{GDasConf} R. T. Rakesh, G. Das, and D. Sen, ``An analytical model for millimeter wave outdoor directional non-line-of-sight channels," in \emph{IEEE International Conf. on Commun. (ICC)}, Paris, France, May 2017, pp. 1-6.

\bibitem{GDasJournal} R. T. Rakesh, D. Sen, and G. Das, ``A novel geometry-based stochastic double directional analytical model for millimeter wave outdoor NLOS channels,'' in \emph{arXiv:1804.02831}, 2018.

\bibitem{Aroon} A. Narayanan, S. T. Veetil, and R. K. Ganti, ``Coverage analysis in millimeter wave cellular networks with reflections," in \emph{Proc. of the IEEE Global Commun. Conf.}, Singapore, Dec. 2017, pp. 1-6.



\bibitem{Aditya} S. Aditya, H. S. Dhillon, A. F. Molisch, and H. M. Behairy, ``A Tractable Analysis of the Blind Spot Probability in Localization Networks Under Correlated Blocking," in \emph{IEEE Trans. on Wireless Commun.}, vol. 17, no. 12, pp. 8150-8164, Dec. 2018.






















%
%
%
%

















\end{thebibliography}
\end{document}